\newcommand{\mathd}{\mathrm{d}}
\newcommand{\textverbatim}[1]{{\ttfamily{#1}}}
\newcommand{\tmop}[1]{\ensuremath{\operatorname{#1}}}
\newenvironment{itemizeminus}{\begin{itemize} }{\end{itemize}}
\newenvironment{proof}{\noindent\textbf{Proof\ }}{\hspace*{\fill}$\Box$\medskip}
\newtheorem{definition}{Definition}
\newtheorem{proposition}{Proposition}
\begin{document}

\title{
  Universal bound on the cardinality of\\
  local hidden variables in networks
}

\author{Denis Rosset}
\email{physics@denisrosset.com}
\affiliation{Department of Physics, National Cheng Kung University, Tainan 701, Taiwan}
\affiliation{Group of Applied Physics,   Universit\'e de Gen\`eve, 1211 Gen\`eve, Switzerland}
\author{Nicolas Gisin}
\affiliation{Group of Applied Physics, Universit\'e de Gen\`eve, 1211 Gen\`eve, Switzerland}
\author{Elie Wolfe}
\affiliation{Perimeter Institute for Theoretical Physics, 31 Caroline St. N, Waterloo, Ontario, Canada, N2L 2Y5}

\date{June 12, 2017}

\begin{abstract}
  We present an algebraic description of the sets of local correlations in arbitrary networks, when the parties have finite inputs and outputs.
  We consider networks generalizing the usual Bell scenarios by the presence of multiple uncorrelated sources.
  We prove a finite upper bound on the cardinality of the value sets of the local hidden variables.
  Consequently, we find that the sets of local correlations are connected, closed and semialgebraic, and bounded by tight polynomial Bell-like inequalities.
\end{abstract}

\maketitle

Bell's theorem opened a new perspective for the study of quantum systems, as it predicted that quantum systems exhibit a wider range of correlations than
systems restricted to classical information.
The first results concerned two observers sharing a single resource modeled using a local hidden variable: the corresponding set of correlations is a polytope~{\cite{Pitowsky1991,Brunner2014}}.
This mathematical structure enables straightforward checking of the locality of correlations by linear programming~{\cite{Kaszlikowski2000}}, and the facets of the polytope provide ready-to-use linear inequalities.
In the multipartite version of Bell's locality~{\cite{Mermin1990,Brunner2014}}, the networks describe several observers sharing a single resource; there, the set of local correlations is still a polytope.
In all these studies, it is customary to identify the local hidden variable with a list of deterministic strategies implemented by the parties.
The (finite) number of those strategies provides an upper bound on the cardinality of the local hidden variable, that is the number of different values it has to take to reproduce all local correlations~{\cite{Donohue2015}}.

Later, the description of networks of uncorrelated sources led to local models containing independent local hidden variables~{\cite{Branciard2010,Branciard2012}}, extending the idea of ``local beables'' originated by John Bell~{\cite{Bell1964}}.
There, the correlation sets are no longer polytopes, being not even convex.
However, it is still possible, in some cases, to identify local hidden variables with deterministic strategies and to provide a bound on their cardinality~{\cite{Branciard2012}}.
The local sets of several networks have been characterized, at least partially, in the probability space~{\cite{Branciard2010,Branciard2012,Tavakoli2014,Rosset2015a,Tavakoli2016a,Tavakoli2016,Chaves2016a,Wolfe2016}} or in the entropy space~{\cite{Chaves2012,Chaves2014a,Henson2014}}, but there is no general method providing a list of inequalities in contrast to the case of Bell locality. However, the inflation technique~\cite{Wolfe2016} provides a hierarchy that converges~\cite{Navascues2017} to the local set.

As the description using marginal entropies loses information, we focus on the probability space in the present work.
As done usually for the characterization of local sets, we assume that inputs and outputs are taken from finite sets --- however, we do not assume this restriction on the local hidden variables {\em a priori}.
In any network, the characterization of the local correlations is tractable algebraically as long as one condition is satisfied: that {\em all} local hidden variables take a finite number of values.
When this condition holds, the set of local correlations is described by a system of polynomial inequalities~{\cite{Fritz2012}}, reminiscent of the linear Bell inequalities bounding the local set in Bell scenarios.
Correlations can also be tested for nonlocality by various algorithms.
Thus, an important open question is whether local hidden variables can be restricted to finite sets without loss of generality~{\cite{Branciard2012,Fritz2012}}.

Consider, as a motivating example, the triangle scenario (abbreviated $\Delta$) shown in Figure~\ref{Fig:ThreeScenarios}a, introduced independently in~\cite{Branciard2012} and~\cite{Steudel2015}.
The parties A, B, C share three independent local hidden variables with values $\alpha$, $\beta$, $\gamma$, according to the connections $A \leftarrow (\beta, \gamma)$, $B \leftarrow (\gamma, \alpha)$ and $C \leftarrow (\alpha, \beta)$ --- each party is connected to two variables, and outputs a single bit, written respectively $a, b, c = 0, 1$.
The joint probability distribution $P (a b c)$ describes the behavior of the network~\footnote{
  Writing $P (a b c)$ is an abuse of notation as it does not distinguish between the random variable $a$ itself and the {\em value} $a$ taken by the random variable in particular cases.
  When  necessary, we write $P (0)$ explicitly as $P (a = 0)$ or $P_a (0)$.
}.
We now consider a particular local behavior.
The local hidden variables are all uniformly distributed between 0 and 1.
Each party outputs the Boolean result of the comparison ``$\lambda_1 \geqslant \lambda_2$'', where $(\lambda_1, \lambda_2)$ corresponds to the pair of variables connected to it:
for example, $a = 1$ if and only if $\beta \geqslant \gamma$.
As the underlying $\Delta$-local model has a cyclic symmetry, the resulting correlations are symmetric under cyclic permutation of parties.
We compute easily $P (a b c) = 0$ if $a = b = c$ and $P (a b c) = 1 / 6$ otherwise.
Surprisingly, this behavior, which we write $\vec{P}_{\neq} = (0, 1, 1, 1, 1, 1, 1, 0) / 6$ by enumerating the indices $(a, b, c)$ in the lexicographic order, does not have a symmetric {\em finite} $\Delta$-local model, that is a symmetric model where $\alpha \in \Omega_{\alpha}$ with $\Omega_{\alpha}$ finite, as we now prove.

\begin{figure*}
  \includegraphics{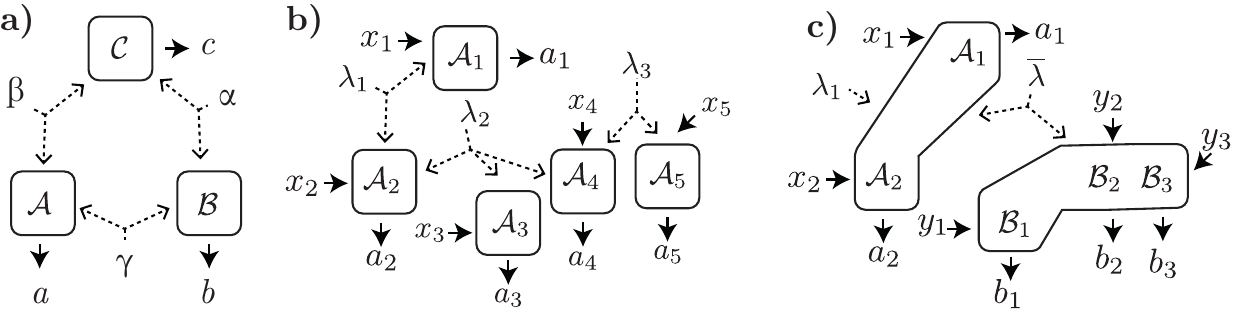}
  \caption{
    \label{Fig:ThreeScenarios}
    In a), the triangle network $\Delta$, where three parties share three bipartite local hidden variables $\alpha, \beta, \gamma$ and produce binary outputs, without receiving any input.
    In b), an example of complex network. In c), the renaming/grouping of parties used in Proposition~\ref{Prop:BetterBound} for the same network.
  }
\end{figure*}

Without loss of generality, a finite local model is written using local hidden variables $\alpha, \beta, \gamma \in \{ 1, \ldots, u \}$ with respective distributions $P_{\alpha}$, $P_{\beta}$, $P_{\gamma}$ and local response functions $P_{\text{A}}$, $P_{\text{B}}$, $P_{\text{C}}$:
\begin{multline}
  \label{Eq:FiniteModelTriangle}
  P(a b c) = \textstyle \sum_{\alpha, \beta, \gamma = 1}^u P_{\alpha} (\alpha) P_{\beta} (\beta) P_{\gamma} (\gamma) \\
  P_{\text{A}}(a | \beta \gamma) P_{\text{B}} (b | \gamma \alpha) P_{\text{C}} (c | \alpha \beta) .
\end{multline}

When the model is symmetric, we have $P_{\alpha} = P_{\beta} = P_{\gamma}$ and $P_{\text{A}} = P_{\text{B}} = P_{\text{C}}$.
To reproduce the correlations $\vec{P}_{\neq}$, we have $u = P_{\alpha} (\alpha = i) > 0$ and $v = P_{\text{A}} (a = j | \beta \gamma = ii) > 0$ for some values $i,j$.
Then $P (a b c = j j j) \geqslant u^3 v^3 > 0$ is a contradiction.
We could think of this as a hint that finite models do not exist for the correlations $\vec{P}_{\ne}$.
This intuition would be incorrect, as finite-valued models exist at the price of breaking symmetry (see Appendix~\ref{App:Asymmetric}).
Moreover, we show in this work that finite-valued models are actually universal, as long as the sets of inputs and outputs employed by the parties are themselves finite.

\paragraph*{Definitions. ---}
The study of nonlocality can be generalized to arbitrary networks of sources and parties.
Let us consider a network of $m$ parties sharing $n$ sources as in Figure~\ref{Fig:ThreeScenarios}b.
The parties are written $\mathcal{A}_1, \ldots, \mathcal{A}_m$, and have inputs $x_1, \ldots, x_m$ and outputs $a_1, \ldots, a_m$ taken from finite sets such that the observations are described by joint probability distribution:
\begin{equation}
  P (a_1 \ldots a_m | x_1 \ldots x_m) = P \left( \overline{a} \middle| \overline{x} \right)
\end{equation}
collecting $\overline{a} = (a_1, \ldots, a_m)$ and $\overline{x} = (x_1, \ldots, x_m)$.

By numbering the input and output values, we identify $x_i = 1, \ldots, X_i$ and $a_i = 1, \ldots, A_i$, and enumerate the coefficients of $P (a_1 \ldots a_m | x_1 \ldots x_m)$ in a vector $\vec{P} \in \mathbbm{R}^d$ where $d = X_1 \ldots X_m A_1 \ldots A_m$.
We describe the connections in the network by the incidence matrix $I \in \{0, 1 \}^{m \times n}$ where $I_{i j} = 1$ when the $i$-th party is connected to the $j$-th source. A network $\mathcal{N}$ is then described by the size of the input/output sets $\overline{X} = (X_1, \ldots, X_m)$, $\overline{A} = (A_1, \ldots, A_m)$ and the incidence matrix $I$.

In the network $\mathcal{N}$, we turn to the description of local models.
Each source $\mathcal{S}_j$ produces a local hidden variable $\lambda_j \in \Omega_j$, taken from the value set $\Omega_j$ with probability measure~\footnote{The usual notation used in the study of Bell locality uses a probability {\em density} $\rho_j  (\lambda_j)$ while the existence of a proper measure for $\lambda_j$ is implicitly assumed. The notation used here is equivalent and slightly shorter.} $\rho_j$.
Each party $\mathcal{A}_i$ receives an input $x_i$ along with the local hidden variables $\lambda_{[i]} = \{ \lambda_j | I_{i j} = 1 \}$, and processes them according to the response function $P_i (a_i | x_i \lambda_{[i]})$.
Then:
\begin{equation}
  \label{Eq:NetworkLocal}
  P \left( \overline{a} \middle| \overline{x} \right)
  = \prod_{j = 1}^n \int_{\Omega_j} \mathd \rho_j (\lambda_j) \prod_{i = 1}^m
  P_{\mathcal{A}_i} (a_i | x_i \lambda_{[i]}) .
\end{equation}
The local model $\mathcal{M}$ is fully described by the value sets $\Omega_j$, the probability densities $\rho_j$ and the response functions $P_i (a_i | x_i \lambda_{[i]})$.
The set of network-local (or $\mathcal{N}$-local) correlations $\mathcal{L}$ is the set of all $P \left( \overline{a} \middle| \overline{x} \right)$ reproduced by some model $\mathcal{M}$ according to~(\ref{Eq:NetworkLocal}).
Our main objective is to prove that all $\mathcal{N}$-local correlations can be produced by a simple model.

\begin{proposition}
  \label{Prop:GenericModel}
  Any $\vec{P} \in \mathcal{L} \subset \mathbb{R}^d$ can be reproduced with a generic model of the form:
  \begin{equation}
    \label{Eq:GenericModel}
    P \left( \overline{a} \middle| \overline{x} \right) =
    \prod_{j = 1}^n \sum_{\lambda_j = 1}^{d + 1} P_{\lambda_j}(\lambda_j) \prod_{i = 1}^m P_{\mathcal{A}_i} (a_i | x_i \lambda_{[i]}),
  \end{equation}
  where the local hidden variables $\lambda_j$ are integers in the set $\{ 1, \ldots, d + 1 \}$.
\end{proposition}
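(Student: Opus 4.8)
\ The plan is to take an arbitrary model $\mathcal{M}$ realizing $\vec{P}$ via Eq.~(\ref{Eq:NetworkLocal}) and to make the value sets $\Omega_j$ finite one source at a time. Fix a source $\mathcal{S}_j$ and freeze all response functions together with the measures $\rho_{j'}$ of the other sources. Carrying out the integrals over those other sources defines, for each $\lambda_j \in \Omega_j$, a vector $f_j(\lambda_j) \in \mathbb{R}^d$ with components $\prod_{j' \neq j} \int_{\Omega_{j'}} \mathd\rho_{j'}(\lambda_{j'})\,\prod_{i=1}^m P_{\mathcal{A}_i}(a_i | x_i \lambda_{[i]})$, and Fubini gives $\vec{P} = \int_{\Omega_j} \mathd\rho_j(\lambda_j)\, f_j(\lambda_j)$; summing the components over $\overline{a}$ at fixed $\overline{x}$ shows each $f_j(\lambda_j)$ is a valid conditional distribution, hence lies in a fixed compact convex polytope of $\mathbb{R}^d$. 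The key structural remark is that $\vec{P}$ is an \emph{affine} function of the single measure $\rho_j$ when everything else is held fixed; this is why the reduction must proceed source by source, since the correlations obtained from \emph{product} measures form precisely the nonconvex set $\mathcal{L}$.

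Next I would invoke a Carathéodory-type lemma for barycentres: if $\mu$ is a probability measure on a space $S$ and $g \colon S \to \mathbb{R}^k$ is measurable and integrable, then $\int g\,\mathd\mu$ equals a convex combination of at most $k+1$ of the values $g(s)$ with $s$ in the support of $\mu$. One proves it by descent on the dimension --- if the barycentre lies on the relative boundary of $\overline{\mathrm{conv}}\,g(S)$, a supporting hyperplane must contain $g(s)$ for $\mu$-almost every $s$, so one restricts $g$ to that hyperplane and repeats; a barycentre in the relative interior already lies in $\mathrm{conv}\,g(S)$, where Carathéodory applies. Taking $g = f_j$ and $k = d$ produces values $\ell_1, \dots, \ell_{d+1} \in \Omega_j$ and weights $p_1, \dots, p_{d+1} \ge 0$ with $\sum_k p_k = 1$ and $\sum_k p_k\, f_j(\ell_k) = \vec{P}$.

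To finish, I would replace $\rho_j$ by $\sum_k p_k\, \delta_{\ell_k}$ --- keeping every response function and every other source exactly as before, so that $\vec{P}$ is unchanged while $\mathcal{S}_j$ now uses at most $d+1$ values --- then relabel those values as $1, \dots, d+1$ (padding with zero-probability values and arbitrary response functions if fewer than $d+1$ occur) and set $P_{\lambda_j}$ to the corresponding weights. Iterating over $j = 1, \dots, n$, and noting that the $j$-th step alters only source $\mathcal{S}_j$ and hence does not disturb the finiteness already secured for $\mathcal{S}_1, \dots, \mathcal{S}_{j-1}$, yields a model in the form~(\ref{Eq:GenericModel}). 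The step I expect to need the most care is the barycentre lemma, specifically keeping the points $f_j(\ell_k)$ in the genuine image of $f_j$ rather than merely in its closed convex hull, so that the $\ell_k$ are actual elements of $\Omega_j$ and the original response functions can be reused verbatim; if $\Omega_j$ is pathological one can first replace it by a standard Borel space carrying the same correlations. The bound $d+1$ stated here is deliberately loose --- $f_j$ in fact takes values in a polytope of dimension $d - \prod_i X_i$ --- and is tightened later.
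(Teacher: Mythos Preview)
Your proposal is correct and follows essentially the same route as the paper: the paper's Proposition~\ref{Prop:Cardinality} fixes one source, defines the vector $\vec{P}_{\mu}$ obtained by integrating out the remaining sources (your $f_j(\lambda_j)$), observes that $\vec{P}$ is the $\rho_1$-barycentre of these vectors, and invokes Carath\'eodory to replace $\rho_1$ by a finitely supported measure; Proposition~\ref{Prop:GenericModel} then follows by iterating over the sources and relabelling, exactly as you describe. The only notable difference is in the auxiliary lemma ensuring the decomposition uses genuine image points rather than limit points: the paper (Appendix~\ref{App:Caratheodory}) argues by splitting $X$ along a coordinate hyperplane and inducting on the ambient dimension, whereas you sketch a supporting-hyperplane descent on the relative boundary --- both are standard and yield the same conclusion.
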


The proof will directly follow from Proposition~\ref{Prop:Cardinality} below, which bounds the cardinality of network-local models.

\paragraph*{Cardinality of network-local models. ---}
An estimation of the complexity of the model $\mathcal{M}$ is given by the cardinality of the input sets $\Omega_j$. When $\Omega_j$ is finite, we write $c_j = | \Omega_j |$ and otherwise $c_j = \infty$. The overall complexity is given by the tuple $\overline{c} = (c_1, \ldots, c_n)$.
We compare the power of given cardinalities by the componentwise partial order: $\overline{c} \leqslant \overline{c}'$ if $c_j \leqslant c'_j$ for all $j$.

We now look for an upper bound $\overline{c}_{\text{ub}}$ in any network such that all local $\vec{P} \in \mathcal{L}$ can be realized with models of cardinality $\overline{c} \leqslant \overline{c}_{\text{ub}}$.
Our bound is not minimal but scales linearly with the dimension of $\vec{P}$.
Note that the cardinalities $\overline{c}$ are not totally ordered and thus there could well be several minimal upper bounds $\overline{c}$ in a given network.

Our construction rests on the existence of individual bounds on the cardinality of each source.
We show that there exists a finite upper bound $u_1$ on the cardinality $c_1$ of the source $\mathcal{S}_1$: any model that reproduces a local behavior $\vec{P}$ with cardinality $c_1 > u_1$ can be modified into a model with $c'_1 \leqslant u_1$ without changing $c_2, \ldots, c_n$.
The same proposition holds for each source $\mathcal{S}_j$ with a corresponding upper bound $u_j$.
Thus, the cardinality $\overline{c}_{\text{ub}} = (u_1, \ldots, u_n)$ is sufficient to reproduce all local behaviors.

We first give a crude version of our construction.
Without loss of generality, we consider the cardinality of the source $\mathcal{S}_1$.

\begin{proposition}
  \label{Prop:Cardinality}
  In a given network $\mathcal{N}$, let $\mathcal{M}$ be a model for the local behavior $\vec{P} \in \mathbbm{R}^d$, with cardinality $\overline{c} = (c_1, c_2, \ldots, c_n)$.
  Then there is a model $\mathcal{M}'$ of cardinality $\overline{c}' = (d + 1, c_2, \ldots, c_n)$ that reproduces $\vec{P}$.
\end{proposition}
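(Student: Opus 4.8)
The plan is to fix all sources except $\mathcal{S}_1$ and treat the contribution of $\mathcal{S}_1$ as a mixture, then apply a Carathéodory-type dimension argument to reduce the number of mixture components to $d+1$. Concretely, I would start from the local model~(\ref{Eq:NetworkLocal}) and isolate the integral over $\Omega_1$. For a fixed value $\lambda_1 \in \Omega_1$, define the vector $\vec{Q}(\lambda_1) \in \mathbbm{R}^d$ whose $(\overline{a},\overline{x})$-component is
\begin{equation}
  Q_{\overline{a},\overline{x}}(\lambda_1) = \prod_{j=2}^n \int_{\Omega_j} \mathd\rho_j(\lambda_j) \prod_{i=1}^m P_{\mathcal{A}_i}(a_i \mid x_i \lambda_{[i]}),
\end{equation}
i.e.\ the conditional behavior produced by the rest of the network once $\mathcal{S}_1$ has emitted $\lambda_1$. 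The key structural observation is that this is well-defined because $\mathcal{S}_1$ feeds into each party only through $\lambda_{[i]}$, so conditioning on $\lambda_1$ factorizes the remaining integrals cleanly. Then the full behavior is the average $\vec{P} = \int_{\Omega_1} \mathd\rho_1(\lambda_1)\, \vec{Q}(\lambda_1)$, so $\vec{P}$ lies in the convex hull of the set $\mathcal{Q} = \{\vec{Q}(\lambda_1) : \lambda_1 \in \Omega_1\} \subseteq \mathbbm{R}^d$.

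Next I would invoke Carathéodory's theorem: any point in the convex hull of a subset of $\mathbbm{R}^d$ is a convex combination of at most $d+1$ of its points. Hence there exist values $\lambda_1^{(1)},\ldots,\lambda_1^{(d+1)} \in \Omega_1$ and weights $p_1,\ldots,p_{d+1} \geqslant 0$ summing to $1$ with $\vec{P} = \sum_{k=1}^{d+1} p_k\, \vec{Q}(\lambda_1^{(k)})$. Now I build $\mathcal{M}'$: let $\Omega_1' = \{1,\ldots,d+1\}$, set $P_{\lambda_1}(k) = p_k$, and define the new response function of any party $\mathcal{A}_i$ connected to $\mathcal{S}_1$ by $P'_{\mathcal{A}_i}(a_i \mid x_i\, \lambda_{[i]}') = P_{\mathcal{A}_i}(a_i \mid x_i\, \lambda_{[i]})$ with the $\mathcal{S}_1$-entry of $\lambda_{[i]}'$ (which is some $k$) substituted by the old value $\lambda_1^{(k)}$; all other sources, densities, and response functions are unchanged. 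Plugging into~(\ref{Eq:NetworkLocal}) for $\mathcal{M}'$ reproduces exactly $\sum_k p_k \vec{Q}(\lambda_1^{(k)}) = \vec{P}$, and the cardinality tuple of $\mathcal{M}'$ is $(d+1,c_2,\ldots,c_n)$ as claimed.

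The step I expect to require the most care is not the Carathéodory application itself but making the measure-theoretic bookkeeping rigorous: one must check that $\lambda_1 \mapsto \vec{Q}(\lambda_1)$ is $\rho_1$-measurable and $\rho_1$-integrable (it is, being a composition of bounded measurable functions with values in a compact set, the simplex-like region $[0,1]^d$), and that the ``average over $\Omega_1$ equals a point of the convex hull'' claim holds for a genuine probability measure rather than a finite mixture — this is the standard fact that the barycenter of a probability measure supported on a bounded measurable set of $\mathbbm{R}^d$ lies in the closed convex hull of that set, after which Carathéodory applies. A minor additional point is to confirm that the substituted response functions remain valid conditional probability distributions (they are, since they are literally relabelings of the old ones), so no normalization is violated. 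Since the construction touches only $\mathcal{S}_1$ and the parties adjacent to it, iterating the proposition over $j = 1,\ldots,n$ immediately yields Proposition~\ref{Prop:GenericModel}.
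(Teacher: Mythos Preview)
Your approach is essentially identical to the paper's: condition on $\lambda_1$, write $\vec{P}$ as the $\rho_1$-barycenter of the conditional behaviors $\vec{Q}(\lambda_1)$, and invoke Carath\'eodory in $\mathbbm{R}^d$ to select at most $d+1$ values. The one point to sharpen is precisely the one you flag at the end: the barycenter of a probability measure lies a priori only in the \emph{closed} convex hull of $\mathcal{Q}$, and Carath\'eodory applied there would return points of $\overline{\mathcal{Q}}$, not necessarily of $\mathcal{Q}$ itself, so the values $\lambda_1^{(k)}\in\Omega_1$ are not yet guaranteed to exist. The paper addresses exactly this gap in Appendix~\ref{App:Caratheodory}, giving an induction on $d$ that reduces an arbitrary probability measure on a bounded (possibly non-closed) $X\subset\mathbbm{R}^d$ to a finite convex combination of points of $X$; your ``standard fact'' needs that refinement (or a citation to it) to be complete.
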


\begin{proof}
  Let us perform the following experiment with the model $\mathcal{M}$.
  We sample randomly a value $\lambda_1 = \mu \in \Omega_1$ from the source $\mathcal{S}_1$, and then replace $\mathcal{S}_1$ by a deterministic source that always outputs $\lambda_1 = \mu$.
  The behavior of the resulting model is:
  \begin{equation}
    P_{\mu} \left( \overline{a} \middle| \overline{x} \right) = \left[ \prod_{j = 2}^n \int_{\Omega_j} \mathd \rho_j (\lambda_j) \prod_{i = 1}^m P_{\mathcal{A}_i} (a_i | x_i \lambda_{[i]}) \right]_{\lambda_1 = \mu} .
  \end{equation}
  
  Now, the vector $\vec{P}_{\mu}$ is a random variable that depends on $\mu$ with average:
  \begin{equation}
    \langle \vec{P}_{\mu} \rangle = \int_{\Omega_1} \mathd \rho_1 (\mu) \vec{P}_{\mu},
  \end{equation}
  and this average is equal to the original behavior $\vec{P}$.
  Let $U = \{  \vec{P}_{\mu} | \mu \in \Omega_1 \} \subset \mathbbm{R}^d$ be the set of possible values of $\vec{P}_{\mu}$.
  By construction, $\langle \vec{P}_{\mu} \rangle$ is a convex mixture of points in $U$.
  Using Carath{\'e}odory's theorem (see Appendix~\ref{App:Caratheodory}), we can write $\langle \vec{P}_{\mu} \rangle$ as a convex combination of at most $d + 1$ points of $U$ with weights $\{w_k\}$:
  \begin{equation}
    \langle \vec{P}_{\mu} \rangle = \sum_{k = 1}^{d + 1} w_k \vec{P}_{\mu_k}
  \end{equation}
  as each point of $U$ can be realized with a $\mu \in \Omega_1$ (not necessarily unique).
  Then:
  \begin{equation}
    \vec{P} = \sum_{k = 1}^{d + 1} w_k \left[ \prod_{j = 2}^n \int_{\Omega_j} \mathd \rho_j (\lambda_j) \prod_{i = 1}^m P_{\mathcal{A}_i} (a_i | x_i \lambda_{[i]}) \right]_{\lambda_1 = \mu_k} .
  \end{equation}
  Thus, we can replace $\rho_1$ by a probability distribution $\rho_1'$ on a discrete set $\Omega_1' = \{ \mu_1, \ldots, \mu_{d + 1} \}$ with weights $w_k$, and obtain a model $\mathcal{M}'$ that reproduces the behavior $\vec{P}$ with cardinality $c_1 \leqslant d + 1$, while other elements of $\mathcal{M}$ are left unchanged.
\end{proof}

The proof of Proposition~\ref{Prop:GenericModel} follows directly.
Once all sets $\Omega_j$ have been replaced by sets of finite cardinality, we simplify the model structure by replacing all local variables by integers $\lambda''_1, \ldots ., \lambda''_n \in \{ 1, \ldots, d + 1 \}$ indexing the elements in the finite sets $\Omega_1', \ldots, \Omega_n'$.
In the end, any $\mathcal{N}$-local behavior can be reproduced by the generic model~(\ref{Eq:GenericModel}).

This result simplifies the study of local models in arbitrary networks.
When all the involved sets of values are finite, the set of correlations is parameterized by the generic model~(\ref{Eq:GenericModel}), a polynomial system involving a finite number of equations and unknowns (as already noted by various authors~\cite{Geiger2001,Fritz2012}.
This mathematical structure enables the generalization of several concepts used in the study of Bell locality.
As detailed in Appendix~\ref{App:Semialgebraic}, the set of network-local correlations is a closed semialgebraic set bounded by system of polynomial inequalities of the form $f (\vec{P}) \geqslant 0$ (for Bell, it was a polytope bounded by a finite number of linear inequalities).
Thus, any nonlocal behavior $\vec{P}$ violates such an inequality by a nonzero amount, a fact that can be tested experimentally.
The membership problem (is $\vec{P}$ local?) is a polynomial feasibility problem, solved for example by sum-of-squares relaxations that provide the relevant inequality (for Bell, a linear program).

Our result also simplifies the machinery of proofs, as it removes the conceptual difficulties of continuous models such as non-empty sets of measure zero.
An example is given in Appendix~\ref{App:Elementary}, where we provide an elementary proof that the behavior $P_{a b c} (000) = P_{a b c} (111) = 1 / 2$ is non-$\Delta$-local.

Our construction (and its refinements below) rely on Carath{\'e}odory's theorem, and is not directly constructive.
However, as it provides an upper bound on all $c_j$, we can always find a successful realization of $\vec{P}$ using a model of bounded cardinality.
We provide an example of such an exhaustive search in Appendix~\ref{App:Asymmetric}.

\paragraph*{Better upper bounds on the cardinality. ---}
We refine our bound by observing two properties of the set $U$ used in the proof of Proposition~\ref{Prop:Cardinality}.
Firstly, the bound provided by Carath{\'e}odory's theorem depends on the affine dimension of $U$, which is always less than $d$.
Secondly, we replace Carath{\'e}odory's theorem by the variant due to Fenchel, and use the fact that $U$ can always be taken as connected.

Let $\mathcal{P}$ be the set of arbitrary nonsignaling probability distributions $P\left( \overline{a} \middle| \overline{x} \right)$ in a network $\mathcal{N}$.
Its affine dimension~\cite{Boyd2004} is given by:
\begin{equation}
  \tmop{affdim} (\mathcal{P}) = \prod_{i = 1}^m [X_i (A_i - 1) + 1] - 1,
\end{equation}
where $(A_i, X_i)$ is the number of (outputs, inputs) of the $i$-th party.
This affine dimension is made explicit, for example, by the Collins-Gisin parameterization of $\mathcal{P}$~\cite{Collins2004}.

As before, we considering the local hidden variable $\lambda_1$ without loss of generality.
We write $\overline{\lambda} = (\lambda_2, \ldots, \lambda_n)$ with corresponding probability measure $\overline{\rho}$ over a set $\overline{\Omega}$.
We rename the parties and local hidden variables as follows.
As drawn in Figure~\ref{Fig:ThreeScenarios}c, we write $\mathcal{A}_1, \ldots, \mathcal{A}_N$ the parties connected to the local hidden variable $\lambda_1 \in \Omega_1$, with inputs $x_1, \ldots, x_N$ and outputs $a_1, \ldots, a_N$, collected in $\overline{a} = (a_1, \ldots, a_N)$, $\overline{x} = (x_1, \ldots, x_N)$.
We write $\mathcal{B}_1, \ldots, \mathcal{B}_{n - N}$ be the remaining parties, with inputs $y_1, \ldots, y_{n - N}$ and outputs $b_1, \ldots, b_{n - N}$, which we
also collected in $\overline{b}$ and $\overline{y}$.
The behavior $\vec{P}$ is written:
\begin{equation}
  \begin{array}{lll}
    &  & P (a_1 \ldots a_N b_1 \ldots b_{n - N} | x_1 \ldots x_{N_{}} y_1
    \ldots y_{n - N})\\
    & = & P \left( \overline{a}  \overline{b} \middle| \overline{x} 
    \overline{y} \right)\\
    & = & \int_{\Omega_1} \mathd \rho_1 (\lambda_1) \int_{\overline{\Omega}}
    \mathd \overline{\rho} \left( \overline{\lambda} \right)
    P_{\overline{\text{A}}} \left( \overline{a} \middle| \overline{x}
    \lambda_1 \overline{\lambda} \right) P_{\overline{\text{B}}} \left(
    \overline{b} \middle| \overline{y \lambda} \right),
  \end{array}
\end{equation}
where $P_{\overline{\text{A}}} \left( \overline{a} \middle| \overline{x} \lambda_1 \overline{\lambda} \right)$ collects the response functions of $A_1 \ldots A_N$ and $P_{\overline{\text{B}}} \left( \overline{b} \middle| \overline{y \lambda} \right)$ the response functions of $B_1 \ldots B_{n - N}$.

\begin{proposition}
  \label{Prop:BetterBound}
  Let $\mathcal{M}$ be a model for the local behavior $\vec{P} \in \mathbbm{R}^d$, of cardinality $\overline{c} = (c_1, c_2, \ldots, c_n)$.
  Then there is a model $\mathcal{M}'$ of cardinality $\overline{c}' = (u_1, c_2, \ldots, c_n)$ that reproduces $\vec{P}$ with $u_1 = \tmop{affdim} (\mathcal{P}_{\tmop{AB}}) - \tmop{affdim} \left( \mathcal{P}_{\text{B}} \right)$, where $\mathcal{P}_{\text{AB}}$ is the set of nonsignaling $P \left( \overline{a} \overline{b} \middle| \overline{x}  \overline{y} \right)$ and $\mathcal{P}_{\text{B}}$ the set of nonsignaling $P \left( \overline{b} \middle| \overline{y} \right)$.
\end{proposition}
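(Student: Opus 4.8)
The plan is to refine the argument of Proposition~\ref{Prop:Cardinality}, replacing Carath\'eodory's theorem by Fenchel's variant for connected sets and first confining the relevant conditional behaviors to a low-dimensional affine subspace. As before, freezing $\lambda_1 = \mu$ yields a probability distribution $\vec{P}_{\mu}$ with $\vec{P} = \int_{\Omega_1} \mathd \rho_1(\mu)\, \vec{P}_{\mu}$. The first key point is that the $\overline{b}$-marginal of $\vec{P}_{\mu}$ is independent of $\mu$: summing $P_{\mu}\!\left(\overline{a}\,\overline{b}\middle|\overline{x}\,\overline{y}\right)$ over $\overline{a}$ and using that the response functions of $\mathcal{A}_1,\ldots,\mathcal{A}_N$ are normalized makes those factors disappear and leaves exactly $P\!\left(\overline{b}\middle|\overline{y}\right)$, the $\overline{b}$-marginal of $\vec{P}$. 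Hence every $\vec{P}_{\mu}$ lies in the fiber over $P\!\left(\overline{b}\middle|\overline{y}\right)$ of the affine map $\pi : \mathcal{P}_{\text{AB}} \to \mathcal{P}_{\text{B}}$ sending a distribution to its $\overline{b}$-marginal; since $\pi$ is surjective (any nonsignaling $P\!\left(\overline{b}\middle|\overline{y}\right)$ extends, e.g.\ as a product with a fixed distribution on $\overline{a}$), this fiber is an affine subspace of dimension $\tmop{affdim}(\mathcal{P}_{\text{AB}}) - \tmop{affdim}(\mathcal{P}_{\text{B}}) = u_1$.

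For connectedness I would not work with $U = \{\vec{P}_{\mu} : \mu \in \Omega_1\}$ directly, whose shape is uncontrolled, but enlarge it to the set $\widehat{U}$ of behaviors obtained by keeping the sources $\mathcal{S}_2,\ldots,\mathcal{S}_n$ and the response functions of $\mathcal{B}_1,\ldots,\mathcal{B}_{n-N}$ fixed and $\mathcal{S}_1$ deterministic, while letting the response functions of $\mathcal{A}_1,\ldots,\mathcal{A}_N$ range over all admissible functions of their inputs and of $\overline{\lambda}$. Freezing the original $P_{\mathcal{A}_i}$ at $\lambda_1 = \mu$ is one such choice, so $U \subseteq \widehat{U}$, and the same computation shows $\widehat{U}$ again lies in the $u_1$-dimensional fiber. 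Moreover the behavior depends multilinearly — hence continuously — on the $\mathcal{A}$-response functions, which themselves form a compact convex, in particular connected, product of probability simplices; therefore $\widehat{U}$ is compact and connected, $\tmop{conv}(\widehat{U})$ is closed, and $\vec{P} \in \overline{\tmop{conv}}(U) \subseteq \tmop{conv}(\widehat{U})$.

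Fenchel's refinement of Carath\'eodory's theorem then gives $\vec{P} = \sum_{k=1}^{u_1} w_k q_k$ with $q_k \in \widehat{U}$, each $q_k$ realized by the fixed $\mathcal{S}_2,\ldots,\mathcal{S}_n$ and $\mathcal{B}$-responses together with some $\mathcal{A}$-response functions $P_{\mathcal{A}_i}^{(k)}$. Bundling finishes the proof: let a new source $\mathcal{S}_1'$ output $k \in \{1,\ldots,u_1\}$ with probability $w_k$, and let $\mathcal{A}_i$, on receiving the value $k$, apply $P_{\mathcal{A}_i}^{(k)}$; the resulting model reproduces $\vec{P}$ with cardinality $(u_1, c_2, \ldots, c_n)$, since the other sources and the $\mathcal{B}$-responses are untouched.

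I expect the connectedness step to be the crux, and the reason $U$ itself is not enough: the map $\mu \mapsto \vec{P}_{\mu}$ coming from an arbitrary model need not be continuous, so $U$ may be disconnected, and the whole improvement over Proposition~\ref{Prop:Cardinality} rests on passing to the manifestly connected set $\widehat{U}$ of conditional behaviors with free $\mathcal{A}$-responses, while checking that it still lies in the correct $u_1$-dimensional affine subspace and that re-bundling its points into one source leaves $c_2,\ldots,c_n$ intact. The subsidiary issues — surjectivity of $\pi$ for the exact dimension count, and the step from $\overline{\tmop{conv}}$ to $\tmop{conv}$ — are handled by compactness of the nonsignaling polytopes and of the response-function simplices.
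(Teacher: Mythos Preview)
Your proof is correct and follows the paper's skeleton --- the fixed $\overline{b}$-marginal cuts the affine dimension to $u_1$, and connectedness lets Fenchel replace Carath\'eodory --- but the connectedness step is organised differently. The paper keeps the original response functions and augments $\lambda_1$ by a noise strength $\nu\in[0,1]$, letting each $\mathcal{A}_i$ interpolate linearly between its true response and the uniform output; this path-connects every $\vec{P}_\mu$ to the single point $P_{\mathbbm{1}}(\overline{a}\,|\,\overline{x})\,P(\overline{b}\,|\,\overline{y})$, which lies in $U$ for $\nu=1$. You instead enlarge $U$ to the image $\widehat{U}$ of the whole convex set of $\mathcal{A}$-response functions, which is connected for free and of which the paper's noise-interpolated family is just a one-parameter slice. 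Your route is arguably the more conceptual one and makes the re-bundling into a finite source entirely transparent; the paper's has the small practical advantage that, once one has first reduced (via its Appendix~\ref{App:Caratheodory}) to a \emph{finite} mixture over $\Omega_1$, the noise-augmented set is a finite union of segments and hence manifestly compact, so one does not need to worry about the topology on an infinite product of simplices in the case where some of $\Omega_2,\ldots,\Omega_n$ are still infinite.
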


\begin{proof}
  As before, for a fixed value $\lambda_1 = \mu$, we write:
  \begin{equation}
    P_{\mu} \left( \overline{a}  \overline{b} \middle| \overline{x} \overline{y} \right) =
    \left[ \int_{\overline{\Omega}} \mathd \overline{\rho}_{} \left( \overline{\lambda} \right)
      P_{\overline{\text{A}}} \left( \overline{a} \middle| \overline{x} \lambda_1 \overline{\lambda} \right)
      P_{\overline{\text{B}}} \left( \overline{b} \middle| \overline{y \lambda} \right) \right]_{\lambda_1 = \mu},
  \end{equation}
  with $U = \{ \vec{P}_{\mu} | \mu \in \Omega_1 \}$.
  Firstly, we remark that the marginal distribution $P_{\mu} \left( \overline{b} \middle| \overline{y} \right)$ does not depend on $\mu$:
  \begin{equation}
    \label{Eq:MarginalSame}
    P_{\mu} \left( \overline{b} \middle| \overline{x} \right) =
    \int_{\overline{\Omega}} \mathd \overline{\rho}_{} \left( \overline{\lambda} \right)
    P_{\overline{\text{B}}} \left( \overline{b} \middle| \overline{y \lambda} \right) =
    P \left( \overline{b} \middle| \overline{x} \right) .
  \end{equation}
  Secondly, the set $U$ can always be made connected by modifying the model as follows.
  We replace $\lambda_1$ by $\lambda_1' = (\lambda_1, \nu)$, where $\nu \in [0, 1]$ represents noise strength.
  We modify the response function of $\mathcal{A}_1$ such that:
  \begin{equation}
    P'_{\mathcal{A}_1} (a_1 | x_1 \lambda_{[1]} \nu) =
    \nu P_{\mathbbm{1}} (a_1 | x_1) + (1 - \nu) P_{A_1} (a_1 | x_1 \lambda_{[1]}),
  \end{equation}
  where $P_{\mathbbm{1}}$ is the uniformly random distribution, and so on for all $\mathcal{A}_i$.
  The new model still reproduces the behavior $\vec{P}$ provided we fix always $\nu = 0$.
  When we fix $\nu = 1$, we obtain:
  \begin{equation}
    P_{\nu = 1} \left( \overline{a}  \overline{b} \middle| \overline{x} \overline{y} \right) =
    P_{\mathbbm{1}} \left( \overline{a} \middle| \overline{x} \right) P \left( \overline{b} \middle| \overline{x} \right),
  \end{equation}
  which does not depend on $\mu$, and the path obtained by varying $\nu$ between 0 and 1 is continuous.
  Then, any point in $U$ can be brought to $\vec{P}_{\nu = 1}$, and $U$ is connected.
  
  An upper bound on the affine dimension of $U$ is given by the affine dimension of the set of nonsignaling $P \left( \overline{a}  \overline{b} \middle| \overline{x}  \overline{y} \right)$, after removing the degrees of freedom fixed by the constant marginal $P \left(  \overline{b} \middle| \overline{y} \right)$, see Eq.~\eqref{Eq:MarginalSame}.
  Thus:
  \begin{equation}
    \tmop{affdim} (U) \leqslant u_1 \equiv \tmop{affdim}(\mathcal{P}_{\tmop{AB}}) - \tmop{affdim} \left( \mathcal{P}_{\text{B}} \right) .
  \end{equation}
  As the set $U$ is connected, we apply Fenchel's variant of Carath{\'e}odory's theorem and the number of values in the convex decomposition is upper bounded by $u_1$.
\end{proof}

\paragraph*{Examples. ---}
In the $\Delta$ network presented in the introduction, we consider, without loss of generality, the cardinality of the variable $\alpha$.
The affine dimension of $P (a b c)$ is $7$ while the affine dimension of $P (a)$ is 1; thus $u_1 = 6$ and any $\Delta$-local distribution can be reproduced with $\Omega_{\alpha}, \Omega_{\beta}, \Omega_{\gamma}$ containing at most 6 values.

The bilocal scenario considered considered in~{\cite{Branciard2012}} has three parties, $\mathcal{A}$, $\mathcal{B}$ and $\mathcal{C}$, with binary inputs and outputs $a, b, c, x, y, z = 0, 1$ connected by local hidden variables $\lambda_{\mathcal{A}\mathcal{B}}$ and $\lambda_{\mathcal{B}\mathcal{C}}$.
The affine dimension of $P (a b c | x y z)$ is 26, the affine dimension of $P (b c|y z)$ is 8; thus the cardinality of $\lambda_{\mathcal{A}\mathcal{B}}$ is upper bounded by $18$.
However, by enumerating the deterministic strategies corresponding to $\mathcal{A}$, the cardinality of $\lambda_{\mathcal{A}\mathcal{B}}$ is maximum $4$.

This shows that the upper bound presented in this paper is not always optimal.
Indeed, consider a Bell scenario where $\mathcal{A}$ and $\mathcal{B}$ have binary outputs $a, b = 0, 1$ but no input, and are connected to the variable $\lambda$.
The affine dimension of $P (a b)$ is $3$, and thus we obtain the upper bound $3$ on the cardinality of $\lambda$.
However, we can also identify $\lambda = a$ and write $P (a b) = P (a) P (b | a)$ to provide a model with cardinality 2.

\paragraph*{Extensions to quantum resources. ---}
Our result extends to resources other than local hidden variables.
Consider, for example, the bipartite scenario where a quantum source produces the state $\rho_{\text{AB}} = \sum_i p_i | \varphi_i \rangle \langle \varphi_i |$.
Then, we observe:
\begin{align}
  P(ab|xy) &= \tmop{tr} \left[ \rho_{\text{AB}}  \left( \Pi^{\text{A}}_{a | x} \otimes \Pi^{\text{B}}_{b | y} \right) \right] \nonumber \\
  &= \sum_i  p_i \left\langle \varphi_i \middle| \Pi^{\text{A}}_{a | x} \otimes \Pi^{\text{B}}_{b | y} \middle| \varphi_i \right\rangle .
\end{align}
By convexity, an argument similar to Proposition~\ref{Prop:Cardinality} selects at most $d + 1$ elements in the decomposition of $\rho_{\text{AB}}$, so that the {\em rank} of $\rho_{\text{AB}}$ can be reduced, while its {\em dimension} is left unchanged.
In contrast, the complexity of local hidden variables is measured by a single parameter: the cardinality of their value set.

\paragraph*{Conclusion. ---}

We proved that finite local hidden variables suffice to reproduce local correlations in any network.
In consequence, we showed that local sets of correlations are semialgebraic, tightly bounded by a finite number of polynomial inequalities.
We also outlined algebraic methods to check the nonlocality of given correlations.
Our construction rests on the dimension of the output space and does not exploit much of the network structure.
It is thus likely that the bound we propose can be improved.
For example, in the case of the triangle network $\Delta$, Proposition~\ref{Prop:BetterBound} proposes an upper bound of $6$, while the behavior $P_{\neq}$ has a local model using only bits.
We leave as an open question the closing of the gap between these bounds.

However, for some purposes, a tight upper bound does not matter as only the {\em existence} of a finite model simplifies the calculations.
Indeed, finite probability spaces are conceptually much simpler, as the machinery of probability measures/densities can replaced by discrete distributions.
Finally, we exhibited a generalization of our method to quantum states, and outlined the distinct notions of rank and dimension, which are conflated in the classical case.
Interestingly, our method can only bound the rank of a quantum state but not its dimension.

\paragraph*{Acknowledgements. ---}
We thank many colleagues for discussions in the last three years; among them Jean-Daniel Bancal, Nicolas Brunner, Rafael Chaves, Flavien Hirsch, Yeong-Cherng Liang, Marc-Olivier Renou and Rob Spekkens.
This work is supported by the Swiss National Science Foundation via the Mobility Fellowship P2GEP2\_162060 and the NCCR-QSIT; also by Perimeter Institute for Theoretical Physics. Research at Perimeter Institute is supported by the Government of Canada through the Department of Innovation, Science and Economic Development and by the Province of Ontario through the Ministry of Research and Innovation. We also acknowledge support by the Ministry of Education, Taiwan, R.O.C., through “Aiming for the Top University Project” granted to the National Cheng Kung University (NCKU), and by the Ministry of Science and Technology, Taiwan (Grants No. 104-2112-M-006-021-MY3). 

\

\bibliography{cardinality}

\

\appendix

\section{Variants of Carath{\'e}odory's theorem}
\label{App:Caratheodory}
Let $X \subset \mathbb{R}^d$ a bounded subset of $\mathbb{R}^d$ and $\rho$ a probability measure on (the Borel subsets of) $\mathbb{R}^d$.
We assume that $X$ has full affine dimension $a = d$.
Otherwise, when $a < d$, we use the fact that affine maps preserve convex decompositions, and study the image of $X$ under an injective affine map $f: \mathbb{R}^d \rightarrow \mathbb{R}^a$.
As noted in several steps of the proof below, we make liberal use of affine maps to bring points where convenient.

Let $\vec{x}^*$ be the center of mass of $\rho$:
\begin{equation}
  \vec{x}^* = \int_X \mathd \rho(\vec{x}) ~ \vec{x}.
\end{equation}
When $X$ is compact, the standard formulation~\cite{Barany2012} of Carath{\'e}odory's theorem states that there exists at most $d+1$ elements $\vec{x}_i \in X$ such that $\vec{x}^*$ is a convex mixture of those:
\begin{equation}
  \vec{x}^* = \sum_{i=1}^{d+1} w_i \vec{x}_i, \quad \sum_i w_i = 1, \quad \forall i, w_i \ge 0.
\end{equation}
Moreover, when $X$ is also connected, Fenchel's variant~\cite{Barany2012} of the theorem reduces the upper bound to $d$ instead of $d+1$.

When the set $X$ is not closed, we first reduce to the finite case before applying the theorem.
Without loss of generality, we shift $X$ such that $\vec{x}^*$ is the origin.
We rewrite $\vec{x} = (y, \vec{z})$ with $y \in \mathbb{R}$ and $\vec{z} \in \mathbb{R}^{d-1}$ and split $X$ into three sets: $X_+$, $X_0$ and $X_-$ depending on the sign of $y$ (see Figure~\ref{Fig:Caratheodory}).
\begin{figure}
  \includegraphics{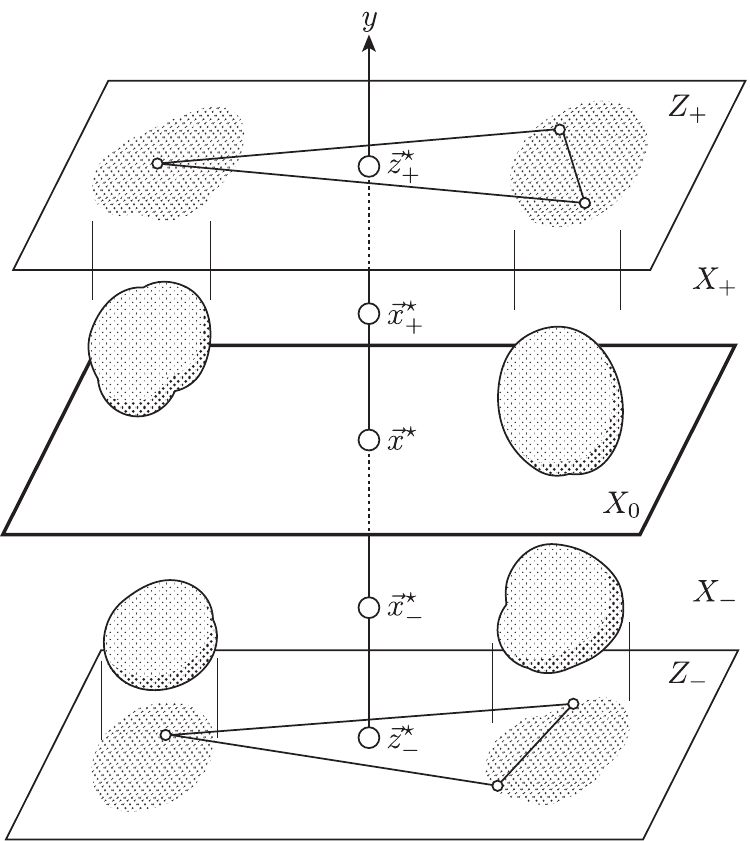}
  \caption{
    \label{Fig:Caratheodory}
    For $d=3$, the decomposition of $X$, represented by dotted blobs, into sets $X_\pm$ depending on the sign of $y$.
    The top and bottom planes represent the projection $Z_\pm$, with the triangles showing that $\vec{z}^\star_\pm$ lies in the convex hull of at most $d$ points.
  }
\end{figure}
We restrict and normalize $\rho$ to each of these three sets to obtain $\rho_+$, $\rho_0$ and $\rho_-$, and compute:
\begin{equation}
\vec{x}^*_0 = \int_{X_0} \mathd \rho_0(\vec{x}) ~ \vec{x}, \qquad  \vec{x}^*_{\pm} = \int_{X_\pm} \mathd \rho_\pm(\vec{x}) ~ \vec{x},
\end{equation}
such that:
\begin{equation}
  \vec{x}^* = w_+ \vec{x}^*_+ + w_0 \vec{x}^*_0 + w_- \vec{x}^*_-
\end{equation}
for nonnegative, normalized weights $w_0, w_\pm$.
When any of $X_\pm$ or $X_0$ is empty (or of measure zero), we simply remove it from this decomposition.
Note that $\vec{x}^*_\pm$, $\vec{x}^*_0$ are convex mixtures on their own and are not necessarily elements of the sets $X_\pm$, $X_0$.
However, $\vec{x}^*_\pm$, $\vec{x}^*_0$ can be obtained by a mixture of a finite number of points of those sets as we show below.

We proceed by induction.
When $d=1$, the solution is trivial when $X_0 \ne \emptyset$.
Otherwise, both $X_+$ and $X_-$ are not empty and we can replace $\vec{x}^*_\pm$ by any point in $X_\pm$, adjusting the weights $w_\pm$ as required.

For $d>1$, we first look at $\vec{x}^*_0$.
If $X_0$ is not empty, we find a finite decomposition for $\vec{x}^*_0$ by applying the proposition for $(d-1)$.
We then write $\vec{x}^*$ as a mixture of $\vec{x}^*_0$ and the point $\vec{x}^\star = (w_+ \vec{x}^*_+ + w_- \vec{x}^*_-) /(w_+ + w_-)$.
It remains to show that the point $\vec{x}^\star$ has a finite decomposition (and now, we identify $\vec{x}^\star = \vec{x}^*$ when $X_0$ has measure zero).
Without loss of generality, we assume again that $\vec{x}^\star = \vec{0}$.
We write:
\begin{equation}
  \vec{x}^\star = \begin{pmatrix} 0 \\ \vec{0} \end{pmatrix} = \frac{1}{w_+ + w_-} \left [ w_+ \begin{pmatrix} y^\star_+ \\ \vec{z}^\star_+ \end{pmatrix} + w_- \begin{pmatrix} y^\star_- \\ \vec{z}^\star_- \end{pmatrix} \right ]. 
\end{equation}

Using an affine transform, we can always ensure that $\vec{z}^\star_\pm = \vec{0}$ without changing the $y$ coordinates (shear mapping).
We first consider the projection of $X_+$ on the last $d-1$ coordinates:
\begin{equation}
  Z_+ = \{ \vec{z}_+ \text{ s.t. } \exists y_+, (y_+, \vec{z}_+) \in X_+ \}.
\end{equation}
By using the proposition for $d-1$, we find a convex decomposition of $\vec{z}^\star_+$ using a finite number of points:
\begin{equation}
  \vec{0} = \vec{z}_+^\star = \sum_i p_i \vec{z}_i, \quad \vec{z}_i \in Z_+, \quad p_i \ge 0, \quad \sum_i p_i = 1.
\end{equation}
By picking for each $\vec{z}_i$ a corresponding point $\vec{x}_i = (y_i, \vec{z}_i) \in X_+$, we obtain a convex decomposition:
\begin{equation}
  \vec{x}_+^\circ = \begin{pmatrix} y^\circ_+ \\ \vec{0} \end{pmatrix} = \sum_i p_i \begin{pmatrix} y_i \\ \vec{z}_i \end{pmatrix},
\end{equation}
where $y^\circ_+ > 0$ by construction.
The same argument for $X_-$ provides a convex decomposition of $\vec{x}_-^\circ = (y^\circ_-, \vec{0})$ with $y^\circ_- < 0$.
Thus, $\vec{x}^\star = \vec{0}$ can be written as a convex mixture of the two points $\vec{x}_\pm^\circ$, which can in turn be written as a combination of a finite number of points.

Note that in Proposition~\ref{Prop:BetterBound}, we use the fact that $X$ is closed {\em and} connected, which can always be achieved by first reducing to the finite case, then adding the connectivity using the trick mentioned in the proof.

\section{Asymmetric models for $P_{\neq}$}\label{App:Asymmetric}

We provide in this Appendix two asymmetric finite $\Delta$-local models for the behavior $\vec{P}_{\neq}$ presented in the introduction.
The first model has cardinality $\overline{c} = (| \Omega_{\alpha} |, | \Omega_{\beta} |, | \Omega_{\gamma} |) = (3, 2, 6)$, while the second model has $\overline{c} =
(2, 2, 2)$.

\paragraph*{By application of our proposition. ---}
We take the original model and apply the construction of Proposition~\ref{Prop:Cardinality} to each variable in turn, modifying their distributions such that the model always reproduces $\vec{P}_{\neq}$ after each step.
We start with $\beta$, and $|\Omega'_{\beta} | > 1$, otherwise $P_{\neq}$ would factorize as $P_{\neq} (a c) = P_{\neq} (a) P_{\neq} (c)$.
We try with $\beta_+, \beta_- \in \Omega_{\beta}$ and a weight $v \in [0, 1]$, such that $\beta_+$ is distributed with probability $v$ and $\beta_-$ with probability $1 - v$.

We obtain the solution $\beta_{\pm} = \left( 3 \pm \sqrt{3} / 6 \right)$, with $v = 1 / 2$.
We turn to $\alpha$, and first try to use three $\alpha_1, \alpha_2, \alpha_3 \in \Omega_{\alpha}$ distributed with the respective nonnegative weights $u_1 + u_2 + u_3 = 1$.
We obtain a model with $u_1 = u_3 = \left( 3 - \sqrt{3} \right) / 6$, $\alpha_1 = \left( 3 - \sqrt{3} \right) / 12$, $\alpha_2 = 1 / 2$ and $\alpha_3 = 1 - \alpha_1$.
We turn finally to $\gamma$, and observe that by the definition of the response functions ``$\lambda_1 \geqslant \lambda_2$?'', the most general set of values for $\gamma$ is given by $\Omega_{\gamma}' = \{ \gamma_1, \ldots, \gamma_6 \}$ where the $\gamma_k$ are contained in the six strict intervals between the seven numbers $\{ 0, \alpha_1, \beta_1, \alpha_2, \beta_2, \alpha_3, 1 \}$ and the exact values of $\gamma_k$ do not matter.
We consider a distribution of $\gamma_k$ with weights $w_k$, and obtain $w_1 = w_2 = w_5 = w_6 = \left( 3 - \sqrt{3} \right) / 12$ and $w_3 = w_4 = 1 / \left( 2 \sqrt{3} \right)$.
Our transformed model has $| \Omega_{\alpha}' | = 3$, $| \Omega_{\beta}' | = 2$ and $| \Omega_{\gamma}' | = 6$.

\paragraph*{Using a combinatorial approach. ---}
We found a minimal model for the correlations $\vec{P}_{\neq}$ with $\overline{c} = (2, 2, 2)$ using an hybrid combinatorial-algebraic search.
We start with $\alpha, \beta, \gamma = 0, 1$ and all coefficients $P_{\alpha} (\alpha), P_{\beta} (\beta), P_{\gamma} (\gamma)$ {\em strictly} between 0 and 1 (as any $c_i = 1$ is impossible).
We first solve the following relaxed problem.
Instead of computing the exact value of $P (a | \beta \gamma)$, we only tracked the possible values of $a$ for the four $(\beta, \gamma)$ pairs:
whether $P (a = 0 | \beta \gamma) = 0$, $P (a = 0 | \beta \gamma) = 1$ or $P(a = 0 | \beta \gamma) \in] 0, 1 [$; the same for $P (b | \gamma \alpha)$ and $P (c | \alpha \beta)$.
For example, it is impossible to have $P (a = 0 |\beta \gamma) \in] 0, 1 [$ for all $(\beta, \gamma)$ pairs, as it renders the condition $P (000) = 0$ impossible to satisfy.

In total, there are half a million ($3^{12}$) possibilities to check.
After removing the obvious symmetries (permutation of parties, bit flips of $\alpha, \beta, \gamma$), we are left with $4$ cases corresponding to polynomial
feasibility problems of degree 4 involving between 4 and 5 unknowns.
We checked those cases manually using a branch-and-bound solver (BMIBNB~{\cite{Lofberg2004}}) and found the following model.
The local hidden variables have distributions $P_{\alpha} (0) = P_{\beta} (0) = 1 / 3$ and $P_{\gamma} (0) = 1 / 4$.
The response functions of A and B are deterministic such that $a = \beta \gamma$ and $b = 1 \oplus \gamma \alpha$, while Charlie uses an additional uniformly random bit $\nu$ so that $c = \nu$ when $\alpha = \beta$, and $c = \alpha$ otherwise.

Without solving the combinatorial problem first, the algebraic problem involves monomials of degree 6 in 15 variables, and is thus out of reach of branch-and-bound or sum-of-squares relaxation methods.

\section{Example of an elementary proof using finite models}
\label{App:Elementary}

In the $\Delta$ network, consider the correlations $\vec{P}_= = (1 / 2, 0, 0, 0, 0, 0, 0, 1 / 2)$ where the outputs are always correlated ($a = b = c$).
These correlations were proven non-$\Delta$-local~{\cite{Fritz2012}} using entropic inequalities, or in~\cite{Wolfe2016} using the inflation technique.
We provide an elementary proof, based on the generality of finite $\Delta$-local models.

Using Proposition~\ref{Prop:GenericModel}, we assume that the local hidden variable sets $\Omega_{\alpha}$, $\Omega_{\beta}$ and $\Omega_{\gamma}$ are finite.
By embedding any local source of randomness in a local hidden variable, we can assume the local response functions to be deterministic~{\cite{Brunner2014}}.
Because $P_{\text{=}} (000) > 0$, there exists a triplet of values $(\alpha_0, \beta_0, \gamma_0)$ resulting in\footnote{That is $P_{\text{A}} (0 | \beta_0 \gamma_0) = P_{\text{B}} (0 |\gamma_0 \alpha_0) = P_{\text{C}} (0 | \alpha_0 \beta_0) = 1$.} $a = b = c = 0$, with the triplet probability $P_{\alpha \beta \gamma} (\alpha_0 \beta_0 \gamma_0) > 0$.
A similar argument exhibits $(\alpha_1, \beta_1, \gamma_1)$ such that $a = b = c = 1$ with $P_{\alpha \beta \gamma} (\alpha_1 \beta_1 \gamma_1) > 0$.
Now, consider $(\alpha, \beta, \gamma) = (\alpha_0, \beta_0, \gamma_1)$; surely, $P_{\alpha \beta \gamma} (\alpha_0 \beta_0 \gamma_1) > 0$.
Because $c = 0$, we must have $a = 0$ whenever $(\beta, \gamma) = (\beta_0, \gamma_1)$.
Now, when $(\alpha, \beta, \gamma) = (\alpha_1, \beta_0, \gamma_1)$ we have $b = 1$, and thus we must have $a = 1$ for $(\beta, \gamma) = (\beta_0, \gamma_1)$, which is a contradiction.

The proof does not hold when $\Omega_{\alpha}, \Omega_{\beta}, \Omega_{\gamma}$ are infinite: while the probability density $\rho_{\alpha} (\alpha_0) \rho_{\beta} (\beta_0) \rho_{\gamma} (\gamma_1)$ could be nonzero, the event $(\alpha, \beta, \gamma) = (\alpha_0, \beta_0, \gamma_1)$ could be happen {\em almost never}, and we would not achieve the desired contradiction.

\section{Tools for the study of network-local sets}\label{App:Semialgebraic}

We review first the tools used to study the sets of Bell-local correlations, before going to the general case.
We obtain the following results in any network:
\begin{itemizeminus}
\item The membership problem (is $\vec{P} \in \mathcal{L}$?) can be solved, in principle, with tools of real algebraic geometry.
\item The network-local set of correlations $\mathcal{L}$ is closed, connected, and described by a finite number of nonstrict polynomial inequalities.
\end{itemizeminus}
Computational requirements, however, preclude the use of such tools except in the simplest cases.

\subsection{Bell-local sets}

All Bell-local correlations can be written as a convex mixture of deterministic behaviors. We follow the notation of the review~{\cite{Brunner2014}}:
\begin{equation}
  \vec{P} = \sum_{\lambda} q_{\lambda} \vec{d}_{\lambda} = D \vec{q}, \qquad
  \vec{q} \geqslant 0, \qquad \sum_{\lambda} q_{\lambda} = 1,
\end{equation}
where the inequality $\vec{q} \geqslant 0$ is understood componentwise.
We also collected the deterministic behaviors column-wise in the matrix $D$.
We write $\mathcal{L}$ the set of all Bell-local $\vec{P}$.
Note that the construction of the Bell-local set from deterministic strategies preserves a large part of the mathematical structure, as the labels $\lambda$ in the sum above contains the deterministic outputs of each input of each party.

\paragraph*{Solving the membership problem. ---}
We consider the following expression for vectors $\vec{\xi} \in \mathbbm{R}^{\dim (\vec{P})}$, $0 \leqslant \vec{\zeta} \in \mathbbm{R}^{\dim (\vec{q})}$:
\begin{equation}
  \label{Eq:ExprBell} I = \vec{\xi}^{\top} (\vec{P} - D \vec{q}) + \vec{\zeta}^{\top}  \vec{q} \geqslant 0
\end{equation}
which is nonnegative by construction (we put no term for the normalization of $\vec{q}$ as it follows from the normalization of $\vec{P}$).
Whenever $\vec{\zeta}^{\top} - \vec{\xi}^{\top} D = 0$, the inequality reduces to an inequality valid for all Bell-local correlations: $I = \vec{\xi}^{\top} \vec{P} \geqslant 0$.
Thus, when faced with a normalized $\vec{P}$, we can search for a certificate that proves $\vec{P} \notin \mathcal{L}$:
\begin{equation}
  \label{Eq:ProgramBell} \begin{array}{rll}
    \nu = \min_{\vec{\zeta}, \vec{\xi}}  \vec{\xi}^{\top} \vec{P} &  & \text{
    such that}\\
    \vec{\zeta} & \geqslant & 0\\
    \vec{\zeta}^{\top} - \vec{\xi}^{\top} D & = & 0.
  \end{array}
\end{equation}
which is a linear program.
When the optimal solution has $\nu^{\ast} < 0$, the behavior $\vec{P}$ is not local. 
Following the theory of polytope duality, all $\vec{P} \notin \mathcal{L}$ can be detected.

\paragraph*{Computing a complete description. ---}
We just showed how to solve the membership problem for Bell-local sets.
However, if we want a complete description of $\mathcal{L}$, we can obtain it by variable elimination.
Consider the set $X$:
\begin{equation}
  X = \left\{ \vec{x} = (\vec{P}, \vec{q}) \text{ s.t. } \vec{P} = D
  \vec{q}, \quad \vec{q} \geqslant 0, \quad \sum_{\lambda} q_{\lambda} = 1
  \right\} .
\end{equation}
This set is a polytope in $\mathbbm{R}^{\dim (\vec{P}) + \dim (\vec{q})}$ described by linear equalities and inequalities.
We now consider the projection of $X$ on the dimensions corresponding to $\vec{P}$:
\begin{equation}
  \label{Eq:ProjectionOnP}
  X_{| \vec{P}} = \left\{ \vec{P} \text{ such that } \exists \vec{q}, (\vec{P}, \vec{q}) \in X \right\} .
\end{equation}
The resulting polytope, equivalent to the local set $\mathcal{L}= X_{|\vec{P}}$ can be computed using Fourier-Motzkin elimination~{\cite{Williams1986}}.
The process is however quite demanding and has only been completed for simple scenarios.

\subsection{General networks}

In more general networks, the network-local correlation set $\mathcal{L}$ is no longer a polytope.
The reduction to deterministic strategies still works, as any source of randomness can be embedded in a local hidden variable~{\cite{Brunner2014}}.
However, the parties no longer have access to the same local hidden variable to perform a convex mixture of those strategies.

In some networks, such as those of bilocal scenarios~{\cite{Branciard2012}}, the reduction to finite models is done while preserving much of the structure.
As we will see in Appendix~\ref{App:SOS}, this can be helpful to reduce the complexity of the local model. Such constructions also work in other networks without loops using a construction similar to the one in~{\cite{Rosset2015a}}.

In networks involving loops, such as the triangle network of Figure~\ref{Fig:ThreeScenarios}a, we have to resort to Proposition~\ref{Prop:BetterBound}.
After computing the cardinality of all value sets, we parameterize the model using discrete probability distributions.
For example, for the triangle scenario, we need $3 \cdot (6 - 1)$ coefficients for the distributions of $\alpha, \beta, \gamma$ and $3 \cdot (2 - 1) \cdot 6^2$ coefficients to parameterize the local response functions (having removed the degree of freedom of the normalization), as in Eq.~(\ref{Eq:FiniteModelTriangle}).
We collect all these coefficients in a vector $\vec{q}$, along with linear inequalities that enforce their nonnegativity:
\begin{equation}
  g_j = \vec{a}_j^{\top}  \vec{q} - b_j \geqslant 0,
\end{equation}
where $\vec{a}_j \in \mathbbm{R}^{\dim (\vec{q})}, b_j \in \mathbbm{R}$.
The fact that $\vec{q}$ reproduces the behavior $\vec{P}$ is given by the relation~(\ref{Eq:GenericModel}), which we abbreviate
\begin{equation}
  f_i = \vec{c}_i^{\top}  \vec{P} - h_i = 0,
\end{equation}
where $\vec{c}_i \in \mathbbm{R}^{\dim (\vec{P})}$ and $h_i (\vec{q})$ is a polynomial in the coefficients of $\vec{q}$.

\paragraph*{Solving the membership problem. ---}
For all polynomials $F_i(\vec{P}, \vec{q})$, $G_{j k} (\vec{P}, \vec{q})$ and $L_l (\vec{P}, \vec{q})$, the following expression is nonnegative:
\begin{equation}
  \label{Eq:ExprGeneral} I = \sum_i F_i f_i + \sum_{j k} (G_{j k})^2 g_j +
  \sum_l (L_l)^2 \geqslant 0.
\end{equation}
In comparison with~(\ref{Eq:ExprBell}), we now have polynomial coefficients, and we employed squared polynomials to enforce nonnegativity\footnote{Being a sum-of-squares is a sufficient condition for a polynomial to be nonnegative.}.
When all the monomials in $\vec{q}$ cancel, we are left with a polynomial inequality in $\vec{P}$ valid for all network-local behaviors.
As with the linear program~(\ref{Eq:ProgramBell}), we can detect points outside $\mathcal{L}$ by the minimization:
\begin{equation}
  \begin{array}{rll}
    \displaystyle
    \nu = \min_{\{ F_i \}, \{ G_{j k} \}, \{ L_l \}} I (\vec{P}) &  & \text{
    such that}\\
    I (\vec{P}, \vec{q}) & = & I (\vec{P})
  \end{array}
\end{equation}
which can be formulated as a semidefinite program by upper bounding the degree of all $\{ F_i \}$, $\{ G_{j k} \}$, $\{ L_l \}$.
By increasing this bound step by step, we get a semidefinite hierarchy, proven to converge~{\cite{Lasserre2001,Parrilo2003}}.

In the triangle network, what are the requirements of this hierarchy for the bound of $| \Omega_{\alpha} | = | \Omega_{\beta} | = | \Omega_{\gamma} | \leqslant 6$ proven in Proposition~\ref{Prop:BetterBound}~?
First, observe that Eq.~(\ref{Eq:FiniteModelTriangle}) can be written:
\begin{equation}
  \label{eq:simplertriangle} P (a b c) = \sum_{\alpha, \beta, \gamma = 1}^6 P
  (a \beta | \gamma) P (b \gamma | \alpha) P (c \alpha | \beta),
\end{equation}
with $P (a \beta | \gamma) = P (\beta) P (a | \beta \gamma), \ldots$ and the constraints suitably modified.
The number of degrees of freedom of $P (a \beta | \gamma)$ is at least $r^2 + r - 1$, where $r = 6$ is the rank of the local hidden variables, with the total number of degrees of freedom $\mathcal{D}= 3 (r^2 + r - 1) = 123$.
The highest degree of involved polynomials is $3$ in Eq.~(\ref{eq:simplertriangle}).
Thus the relaxation should be of degree at least 2, and involve semidefinite matrices of row and column size $\mathcal{D} (\mathcal{D}+ 1) / 2 = 7626$, which is out of our reach.
However, the complexity decreases rapidly with $r$: for $r = 5$, we obtain the size $3828 \times 3828$, while $r = 4$ has size $1653 \times 1653$.
Thus, better upper bounds are not only interesting in theory; they render the membership problem tractable in practice.

\paragraph*{Computing a complete description. ---}
We can also characterize the set $\mathcal{L}$ by considering:
\begin{equation}
  X = \left\{ \vec{x} = (\vec{P}, \vec{q}) \text{ s.t. } f_i (\vec{P},
  \vec{q}) = 0 \text{ and } g_j (\vec{q}) \geqslant 0 \right\},
\end{equation}
which a subset of $\mathbbm{R}^{\dim (\vec{P}) + \dim (\vec{q})}$ characterized by polynomial (in)equalities: thus a semialgebraic set~{\cite{Bochnak1998}}. 
Its projection $X_{| \vec{P}}$ on the variables $\vec{P}$ is also a semialgebraic set. 
As $X$ is a compact set, by the Tube Lemma~{\cite{Rotman1998}}, $X_{| \vec{P}}$ is bounded and closed as well.
Thus $\mathcal{L}$ is a semialgebraic closed set. 
Using the Finiteness Theorem~{\cite[Thm 2.7.1]{Bochnak1998}}, $\mathcal{L}$ can be written as a finite union:
\begin{eqnarray}
  \label{Eq:FormLocalSet}
  \mathcal{L} &= & X_{| \vec{P}} = \bigcup_i \mathcal{L}_i, \\
  \mathcal{L}_i &=& \left\{ \vec{P} \in \mathbbm{R}^{\dim
  (\vec{P})} \text{ s.t. } f_{i j} (\vec{P}) \geqslant 0, \forall j \in
  J_i \right\}, \nonumber
\end{eqnarray}
and $| J_i | \leqslant A (A + 1) / 2$, where $A$ is the affine dimension of the nonsignaling space of behaviors $\vec{P}$, and the bound on the number of inequalities is given by the Br{\"o}cker-Scheiderer Theorem~{\cite[Thm 10.4.8]{Bochnak1998}}.
We also know that $\mathcal{L}$ is connected: by including a noise parameter in each local hidden variable, as done in the proof of Proposition~\ref{Prop:BetterBound}, it is possible to connect all behaviors to the uniformly random distribution.

In principle, $X_{| \vec{P}}$ can be computed using the cylindrical algebraic decomposition algorithm~{\cite{Basu2006}}, such as implemented by the \textverbatim{Reduce} function of Mathematica.
In practice, the process is extremely demanding and can only be completed in very simple cases~\cite{Lee2015}; moreover the output seldom matches the nice form~(\ref{Eq:FormLocalSet}).

In the next Appendix, we demonstrate the usefulness of sum-of-squares in the context of networks, by proving algebraically a numerical result provided in~{\cite{Branciard2012}}.

\section{Example of a sum-of-squares proof}\label{App:SOS}

\begin{table*}[t]
  $\begin{array}{|l|l|l|l|}
    \hline
    = \langle .. \rangle & 1 & C_0 & C_1\\
    \hline
    1 &  & 1 - \eta & 1 - \eta\\
    \hline
    A_0 & \eta - 1 & - (\eta - 1)^2 & - (\eta - 1)^2\\
    \hline
    A_1 & 1 - \eta & (\eta - 1)^2 & (\eta - 1)^2\\
    \hline
  \end{array} \quad \begin{array}{|l|l|l|l|}
    \hline
    = \langle .B_0 . \rangle & 1 & C_0 & C_1\\
    \hline
    1 & 0 & 0 & 0\\
    \hline
    A_0 & 0 & \eta^2 / 2 & - \eta^2 / 2\\
    \hline
    A_1 & 0 & - \eta^2 / 2 & \eta^2 / 2\\
    \hline
  \end{array}$
  
  \
  
  \
  
  $\begin{array}{|l|l|l|l|}
    \hline
    = \langle .B_1 . \rangle & 1 & C_0 & C_1\\
    \hline
    1 & 0 & 0 & 0\\
    \hline
    A_0 & 0 & \eta^2 / 2 & \eta^2 / 2\\
    \hline
    A_1 & 0 & \eta^2 / 2 & \eta^2 / 2\\
    \hline
  \end{array} \quad \begin{array}{|l|l|l|l|}
    \hline
    = \langle .B_0 B_1 . \rangle & 1 & C_0 & C_1\\
    \hline
    1 & 0 & 0 & 0\\
    \hline
    A_0 & 0 & 0 & 0\\
    \hline
    A_1 & 0 & 0 & 0\\
    \hline
  \end{array}$
  \caption{\label{Table:nloc:2loc}All correlators involved in a $2$-locality
  test involving two singlet states and inefficient detectors for Alice and
  Charlie, with efficiency $\eta$.}
\end{table*}

\begin{table*}
  $\begin{array}{|l|l|l|l|l|}
    \hline
    = \langle .. \rangle & 1 & C_0 & C_1 & C_0 C_1\\
    \hline
    1 &  & 1 - \eta & 1 - \eta & \xi\\
    \hline
    A_0 & \eta - 1 & - (\eta - 1)^2 & - (\eta - 1)^2 & (\eta - 1) \xi\\
    \hline
    A_1 & 1 - \eta & (\eta - 1)^2 & (\eta - 1)^2 & (1 - \eta) \xi\\
    \hline
    A_0 A_1 & \zeta & (1 - \eta) \zeta & (1 - \eta) \zeta & \xi \zeta\\
    \hline
  \end{array} \quad \begin{array}{|l|l|l|l|l|}
    \hline
    = \langle .B_0 . \rangle & 1 & C_0 & C_1 & C_0 C_1\\
    \hline
    1 & 0 & 0 & 0 & 0\\
    \hline
    A_0 & 0 & \eta^2 / 2 & - \eta^2 / 2 & 0\\
    \hline
    A_1 & 0 & - \eta^2 / 2 & \eta^2 / 2 & 0\\
    \hline
    A_0 A_1 & 0 & f_2 & - f_2 & 0\\
    \hline
  \end{array}$
  
  \
  
  \
  
  $\begin{array}{|l|l|l|l|l|}
    \hline
    = \langle .B_1 . \rangle & 1 & C_0 & C_1 & C_0 C_1\\
    \hline
    1 & 0 & 0 & 0 & 0\\
    \hline
    A_0 & 0 & \eta^2 / 2 & \eta^2 / 2 & f_1\\
    \hline
    A_1 & 0 & \eta^2 / 2 & \eta^2 / 2 & f_1\\
    \hline
    A_0 A_1 & 0 & 0 & 0 & 0\\
    \hline
  \end{array} \hspace{7em} \begin{array}{|l|l|l|l|l|}
    \hline
    = \langle .B_0 B_1 . \rangle & 1 & C_0 & C_1 & C_0 C_1\\
    \hline
    1 & 0 & 0 & 0 & 0\\
    \hline
    A_0 & 0 & 0 & 0 & 0\\
    \hline
    A_1 & 0 & 0 & 0 & 0\\
    \hline
    A_0 A_1 & 0 & 0 & 0 & 0\\
    \hline
  \end{array}$
  \caption{
\label{Table:nloc:model}
Bilocal model for the correlations after symmetrization. The number of variables has been greatly reduced.
}
\end{table*}

We give below an example of sum-of-squares relaxations applied to the
characterization of network-local sets, taken from the section III.C.2
of~{\cite{Branciard2012}}. There, the authors studied the correlations coming
from an entanglement-swapping experiment and its resistance to detector
inefficiencies. However, the result $\eta_{\tmop{biloc}} = 2 / 3$ was obtained
numerically. We provide an algebraic proof below using a sum-of-squares
decomposition.

\subsection{$2$-local model}

First, we summarize the $2$-local model considered in~{\cite{Branciard2012}}.
We consider an entanglement swapping experiment, where the source
$\mathcal{S}_1$ is connected to Alice and Bob, while the source
$\mathcal{S}_2$ is connected to Bob and Charlie. Bob does not receive an
input, but outputs two bits, $B_0 = \pm 1$ and $B_1 = \pm 1$. Alice and
Charlie have binary inputs $x, y = 0, 1$ with corresponding outputs $A = \pm
1$ and $C = \pm 1$. From the bilocality condition, we write:
\begin{multline}
  \label{eq:bilocalconstraint} P (A B_0 B_1 C | x z) =\\ \int P (A | x
  \lambda_1) P (B_0 B_1 | \lambda_1 \lambda_2) P (C | z \lambda_2) \mathd
  \pi_1 (\lambda_1) \mathd \pi_2 (\lambda_2) .
\end{multline}
In~{\cite{Branciard2012}}, this was shown to be equivalent to the existence of
an underlying distribution of $A_0$, $A_1$, $B_0$, $B_1$, $C_0$, $C_1$, where
$A_x$ and $C_z$ are the deterministic values of $A$ and $C$ for all inputs. As
these are $\pm 1$-valued variables, we can compute the averages $\langle
A_0^{i_0} A_1^{i_1} B_0^{j_0} B_1^{j_1} C_0^{k_0} C_1^{k_1} \rangle$ for $i_0,
i_1, j_0, j_1, k_0, k_1 = 0, 1$. As $A_0, A_1$ and $C_0, C_1$ can all be
observed at the same time, the following distribution exists and is
nonnegative:
\begin{align}
  \label{eq:bilocalpos} P (A_0 = \alpha_0, A_1 = \alpha_1, B_0 = \beta_0,
    B_1 = \beta_1, C_0 = \gamma_0, C_1 = \gamma_1) & \nonumber \\
    = \frac{1}{64} \Big \langle (1 + \alpha_0 A_0) (1 + \alpha_1 A_1) (1 + \beta_0
    B_0) \qquad \qquad &  \nonumber \\
    (1 + \beta_1 B_1) (1 + \gamma_0 C_0) (1 + \gamma_1 C_1) \Big \rangle \geqslant 0, &
\end{align}
and the independence of $\Lambda_1$, $\Lambda_2$ reduces to:
\begin{equation}
  \langle A_0^{i_0} A_1^{i_1} C_0^{k_0} C_1 \rangle = \langle A_0^{i_0}
  A_1^{i_1} \rangle \langle C_0^{k_0} C_1^{k_1} \rangle
\end{equation}
for all $i_0, i_1, k_0, k_1 = 0, 1$. The $\langle \ldots \rangle$ containing
both $A_0 A_1$ or both $C_0 C_1$ cannot be obtained from the distribution $P
(A B_0 B_1 C | x z)$ and are unknown parameters of the model.

\subsection{Quantum correlations to test}

The quantum correlations are obtained from the state $\rho_{\text{ABC}} =
\rho_{\tmop{AB}} \otimes \rho_{\tmop{BC}} = | \Psi^- \rangle \langle \Psi^- |
\otimes | \Psi^- \rangle \langle \Psi^- |$ distributed by the sources
$\mathcal{S}_1$ and $\mathcal{S}_2$. Alice and Charlie have both inefficient
detectors with detection efficiency $\eta$. When $\eta = 1$, Alice and Charlie
use the projective measurements with $\pm 1$-valued outcomes:
\begin{equation}
  \overline{A}_0 = \overline{C}_0 = \frac{\sigma_z + \sigma_x}{\sqrt{2}},
  \qquad \overline{A}_1 = \overline{C}_1 = \frac{\sigma_z -
  \sigma_x}{\sqrt{2}},
\end{equation}
but in case of nondetection, Alice outputs $A = (- 1)^{x + 1}$ and Charlie
always outputs $C = + 1$. Thus, their effective measurement operators are:
\begin{align}
  A_0 &= \eta \overline{A}_0 - (1 - \eta) \mathbbm{1}, & A_1 &= \eta
  \overline{A}_1 + (1 - \eta) \mathbbm{1},  \nonumber \\
  C_z &= \eta \overline{C}_z + (1 - \eta) \mathbbm{1} . & &
\end{align}
Bob performs a Bell state measurement $\{ \mathcal{B}_{B_0 B_1} \}$ with four
outcomes, described by two bits $B_0, B_1 = \pm 1$:
\begin{align}
  \mathcal{B}_{+ +} &= | \Phi^+ \rangle \langle \Phi^+ |, & \mathcal{B}_{- +} & = | \Phi^- \rangle \langle \Phi^- |, \nonumber \\
  \mathcal{B}_{+ -} &= | \Psi^+ \rangle \langle \Psi^+ |, & \mathcal{B}_{- -} &= | \Psi^- \rangle \langle \Psi^- |,
\end{align}
so that the value of the bits $B_0$, $B_1$ and their parity $B_0 B_1$, are
given by:
\begin{align}
  B_0 &=\mathcal{B}_{+ +} +\mathcal{B}_{+ -} -\mathcal{B}_{- +} -\mathcal{B}_{- -}, \nonumber \\
  B_1 &=\mathcal{B}_{+ +} -\mathcal{B}_{+ -} +\mathcal{B}_{- +} -\mathcal{B}_{- -}, \nonumber \\
  B_0 B_1 &=\mathcal{B}_{+ +} -\mathcal{B}_{+ -} -\mathcal{B}_{- +} +\mathcal{B}_{- -} .
\end{align}
The corresponding correlations are given in Table~\ref{Table:nloc:2loc}.

\subsection{Simplified $2$-local model}

We find that the correlations are invariant under a symmetry group $G$ of
order $8$, with generators $g_{\tmop{AB}}$, $g_{\tmop{BC}}$ and
$g_{\tmop{ABC}}$:
\begin{align}
  \begin{array}{rl}
    g_{\text{AB}} & =
                    \left\{ \begin{array}{lr}
    A_0 \rightarrow & - A_1\\
    A_1 \rightarrow & - A_0\\
    B_1 \rightarrow & - B_1
  \end{array} \right.,
    \\
    & \\
    g_{\text{BC}} & =
                    \left\{ \begin{array}{lr}
    C_0 \rightarrow & C_1\\
    C_1 \rightarrow & C_0\\
    B_0 \rightarrow & - B_0
   \end{array} \right.,
   \end{array} &&
  g_{\text{ABC}} &= \left\{ \begin{array}{lr}
    A_0 \rightarrow & - C_0\\
    A_1 \rightarrow & C_1\\
    C_0 \rightarrow & - A_0\\
    C_1 \rightarrow & A_1\\
    B_0 \rightarrow & B_1\\
    B_1 \rightarrow & B_0
  \end{array} \right. . 
\end{align}
Both $g_{\tmop{AB}}$ and $g_{\tmop{BC}}$ can be applied on any bilocal model
by embedding a flag in $\mathcal{S}_1$ and $\mathcal{S}_2$ respectively, while
$g_{\tmop{ABC}}$ is not compatible with the model. We thus apply the
symmetries $g_{\tmop{AB}}$ and $g_{\tmop{BC}}$ on our bilocal model filled
with the values of Table~\ref{Table:nloc:2loc} to obtain the correlations in
Table~\ref{Table:nloc:model}, written using four unknowns $\xi$, $\zeta$,
$f_1$, $f_2$ and the detection efficiency $\eta$. We thus need to find the
range of $\eta$ such that values for the unknown variables exist and
Eq.~(\ref{eq:bilocalpos}) is satisfied (the other constraints are already
taken care of).

The solution ($\eta \in [0, 2 / 3]$) was found using a sum-of-squares
relaxation with the solver SOSTOOLS~{\cite{sostools}}; we then worked the
analytic solution below from the numerical output of the solver.

\subsection{Sum-of-squares solution}

Among the inequalities coming from~(\ref{eq:bilocalpos}), the following
inequalities are relevant, writing $\overline{\xi} = 1 - \xi$ and
$\overline{\zeta} = 1 + \zeta$:
\begin{equation}
  \begin{array}{rlll}
    0 \leqslant g_1 = & 64 P (+ + + + - +) & = & \overline{\xi \zeta} - 2 (f_1 +
    f_2) ,\\
    0 \leqslant g_2 = & 64 P (- + - + - +) & & \\
    = & \multicolumn{3}{l}{4 \overline{\xi} - \xi \zeta - 2 \eta^2 - 2 \overline{\xi} \eta - 2 f_2 ,} \\
    0 \leqslant g_3 = & 64 P (+ - + + - -) & = & \overline{\xi \zeta} - 2 \eta
    \overline{\zeta} - 2 \overline{\xi} \eta + 4 \eta^2,\\
    0 \leqslant g_4 = & 64 P (- + + + - -) & & \\
    = & \multicolumn{3}{l}{\overline{\xi \zeta} +
    \overline{\xi} (2 \eta - 4) - \overline{\zeta} 2 \eta + 8 \eta - 2 \eta^2
    ,}\\
    0 \leqslant g_5 = & 64 P (+ - + + - +) & = & 2 \overline{\xi} \eta -
    \overline{\xi \zeta} - 2 \eta^2 + 2 f_2,\\
    0 \leqslant g_6 = & 64 P (+ + + + - -) & = & 2 \overline{\zeta} \eta -
    \overline{\xi \zeta} - 2 \eta^2 + 2 f_1.
  \end{array}
\end{equation}

We form the conic combinations:
\begin{align}
    0 \leqslant & F_+ = g_1 + 2 g_3 + g_5 + g_6 = \left( 2 \eta - \overline{\zeta}
    \right) \left( 2 \eta - \overline{\xi} \right), \nonumber \\
    0 \leqslant & F_-  = 2 g_1 + g_3 + 2 g_5 + 2 g_6 = - \left( 2 \eta -
    \overline{\zeta} \right) \left( 2 \eta - \overline{\xi} \right), \nonumber \\
    0 \leqslant & I = (2 g_1 + g_2 + 3 g_5 + 2 g_6) = \overline{\xi} + \eta
    \overline{\xi} + \eta \overline{\zeta} - \overline{\xi \zeta} - 3 \eta^2 , \nonumber \\
    0 \leqslant & J = (3 g_3 + g_4) / 4 =  \left( \eta - \overline{\zeta} + 1
    \right) \left( 2 \eta - \overline{\xi} \right),
\end{align}
From $F_{\pm}$, we deduce that either $\left( 2 \eta - \overline{\zeta}
\right) = 0$ or $\left( 2 \eta - \overline{\xi} \right) = 0$. We examine first
$\overline{\zeta} = 2 \eta$, and substitute:
\begin{equation}
  I = \overline{\xi} (1 - \eta) - \eta^2 \geqslant 0, \quad J = (1 - \eta)
  \left( 2 \eta - \overline{\xi} \right) \geqslant 0.
\end{equation}
\begin{definition}
  A polynomial $p$ to be a {\em sum-of-squares polynomial} (SOS) if there
  is a decomposition:
  \begin{equation}
    p = \sum_k (m_k)^2,
  \end{equation}
  where $m_k$ are polynomials in $\xi, \eta$.
\end{definition}

For SOS polynomials $p_1, p_2$ and $p_3$, we have $K = p_1 + p_2 I + p_3 J
\geqslant 0$. To find a bound on the detection efficiency
$\eta_{\text{bound}}$, we write:
\begin{equation}
  K = \eta_{\text{bound}} - \eta = p_1 + p_2 I + p_3 J \geqslant 0,
\end{equation}
which will prove that bilocal models satisfy $\eta \leqslant
\eta_{\text{bound}}$. The bound $\eta_{\text{bound}} = 2 / 3$ is verified
using:
\begin{equation}
  \label{Eq:nloc:2locdetineq} p_1 = \frac{(2 - 3 \eta)^2}{6}, \quad p_2 = p_3
  = \frac{1}{2} .
\end{equation}
The procedure for $\overline{\xi} = 2 \eta$ is similar and also gives
$\eta_{\tmop{bound}} = 2 / 3$.

\end{document}